\newtheorem{proposition}{Proposition}
\newtheorem{remark}{Remark}
\begin{document}

\title{Energy-saving Resource Allocation by Exploiting the Context Information}



\author{
\authorblockN{\large{Chuting Yao and Chenyang Yang}}
\vspace{0.2cm}
\authorblockA{Beihang University, China\\
Email: \{ctyao, cyyang\}@buaa.edu.cn } \and
\authorblockN{\large{Zixiang Xiong}} \vspace{0.2cm}
\authorblockA{ Texas A\&M University, US/Beihang University, China \\
Email: zx@ece.tamu.edu}
\thanks{This work was supported by National Natural
Science Foundation of China under Grant 61120106002 and National Basic
Research Program of China under Grant 2012CB316003.}
}

%

%
\maketitle

\begin{abstract}
Improving energy efficiency of wireless systems by exploiting the context
information has received  attention recently as the smart phone market
keeps expanding. In this paper, we devise energy-saving resource
allocation policy for multiple base stations serving non-real-time traffic
by exploiting three levels of context information, where the background
traffic is assumed to occupy partial resources. Based on the solution from
a total energy minimization problem with perfect future information, a
context-aware BS sleeping, scheduling and power allocation policy is
proposed by estimating the required future information with three levels
of context information. Simulation results show that our policy provides
significant gains over those without exploiting any context information.
Moreover, it is seen that different levels of context information play
different roles in saving energy and reducing outage in transmission.
\end{abstract}
\section{Introduction}

Energy efficiency (EE) is one of the major design goal for five-generation
(5G) mobile communication systems.

Recently, improving EE by exploiting context information has drawn
significant attention as the smart phone popularizes. Context information
can be classified into application (e.g., quality of service (QoS)), network
(e.g., congestion status), user (e.g., location or mobility pattern), and
device levels \cite{Choongul2012concept}. The user level context information
can be exploited for predicting transmission rate to assist resource
allocation in future time. For example, assuming perfect rate prediction,
the transmission time or total power was minimized to save power in
\cite{Abou2013Predictive,abou2014toward}, and a scheduling and antenna
closing strategy was proposed to improve EE in
\cite{Draxler2014Anticipatory}. In
\cite{Abou2013Predictive,abou2014toward,Draxler2014Anticipatory}, the
knowledge of the error-free rate prediction plays important role in power
saving, and the base station (BS) is assumed only serving one kind of
traffic (e.g., multimedia streaming).

In real-world systems, the predicted channel gains is never error-free.
Moreover, a BS may serve multiple classes of traffic, e.g., on demand
real-time (RT) traffic such as voice and video that are with high priority,
and non-real-time (NRT) traffic such as pre-subscribed file downloading and
emails. When there are errors in the predicted future channel gains and only
partial resource is available at the BS due to serving other traffic, the
predicted transmission rates are inevitably inaccurate, which will lead to
the EE reduction.

In this paper, we exploit context information from three levels,
application, network and user levels, for energy saving. We design
energy-minimizing resource allocation for the user with NRT service when
some background traffic such as RT service occupies partial transmission
resource. The NRT traffic is often modelled as best effort traffic  and is
served  right after the requests arrive if resource is available. As a
result, the spectral efficient or energy efficient optimization for this
kind of traffic is usually to maximize the capacity or maximize the EE of
the network. Nonetheless, if we know the expected time to accomplish the
service either by user subscription or by user behavior prediction, e.g.,
for a user-subscribed file downloading with a desired deadline, it is
possible to further save energy with long term resource allocation by
waiting for better channel condition and network status.

To demonstrate how such an application level context information can be
exploited to save energy, we artificially model the NRT service as
transmitting a given amount of data in a long duration but with a deadline,
which can reflect some emerging mobile video traffic based on subscription
or popular file prefetching based on user interest prediction. By
formulating and solving a total energy minimizing problem with perfect
future information, the optimal resource allocation policy is obtained,
which provides the intuition on how to exploit  context information as well
as a performance upper bound. By estimating the required future information
with three levels of context information, a BS sleeping, scheduling and
power allocation policy is then proposed. Simulation results show that the
proposed policy provides substantial gain in saving energy over those not
exploiting the context information. Moreover, the context information from
the application and users helps save energy and that from network helps
reduce outage probability.

\vspace{-1mm}

\section{System Model}\vspace{-1mm}
\subsection{Traffic Model and System Model}

Consider a downlink multicell system. $M$ BSs each equipped with $N_t$
antennas serve a single antenna user moving across the cells who demands NRT
traffic  over multiple time slots, where each BS also serves randomly
arrived RT traffic requests. To capture the essence of the problem, we only
consider the simple scenario of one user with NRT traffic, and leave the
issues for multi-user senario in future work. The maximal transmit power of
each BS is $P_{\max}$, and the maximal available bandwidth is $W_{\max}$.

The RT traffic has QoS provision and needs to be served immediately after
the requests arrive the BS. To ensure the QoS of the RT traffic, a given
potion of the resources need to be reserved for each request
\cite{das2003framework}. In practice, since the requests arrive randomly,
the resource used by these requests is time-varying. The transmit power and
bandwidth occupied by the RT traffic of the  $i$th BS in the  $t$th time
slot are denoted as $p^t_{i,\rm RT}$ and $W^t_{i,\rm RT}$, respectively.

The NRT traffic can be modelled as a problem to convey a given number of
bits $B$ within a duration of $T$ time slots, where the duration is much
longer than the transmission time (say, on the order of minutes or even
hours). If the $B$ bits are not reliably transmitted within the duration, an
outage occurs.

Since the NRT traffic is less urgent, the NRT user can use the remaining
transmit power and bandwidth of the RT traffic. Suppose that the request of
the NRT user arrives at the first time slot. We can select proper time slots
with proper transmit power to serve the user in order to save energy. Denote
${\bf m}_i =[m^1_i,\ldots,m^T_i]^H$ as the indicator of scheduling status
for the NRT user by the $i$th  BS, where $m^t_i \in \{0,1\}$. When
$m^t_i=1$, the user is scheduled in the $t$th time slot  by the $i$th BS.
When $m^t_i=0$, the user is not scheduled by the $i$th BS. For simplicity,
we assume that the user is only served by the closest BS in each time slot,
which allows us exploiting the user level context information as will be
clear later.

For the user with NRT traffic, the received signal in the $t$th time slot
can be expressed as
\begin{equation}\label{E:signal}
    y^t = m^t_i \sqrt{\alpha^t} ({\bf h}^t)^H {\bf
    w}^t\sqrt{p^t} x^t
    + n^t,
\end{equation}
where $x^t$ is the transmit symbol for the user in the $t$th time slot with
$\mathbb{E}\{|x^t|^2\}=1$ and $\mathbb E\{\cdot\}$ represents expectation,
$p^t$ is the transmit power, ${\bf{w}}^{t} \in \mathbb{C}^{N_t \times 1}$ is
the beamforming vector, $\alpha^t$ is the large-scale fading gain including
path loss and shadowing between the user and the closest BS, ${\bf{h}}^{t}
\in \mathbb{C}^{N_t \times 1}$ is the independent and identically
distributed (i.i.d.) small scale Rayleigh fading channel vector, and $n^t$
is the noise with variance $\sigma^2$. Since the NRT user is scheduled only
by one BS in each time slot, maximum ratio transmission is optimal, i.e.,
${\bf{w}}^{t} = {\bf{h}}^{t} /\|{\bf{h}}^{t} \|$, where $\|\cdot\|$ denotes
Euclidean norm.

In the $t$th time slot, the achievable rate of the user is
\begin{equation}\label{E:Rate}
R^t  = m^t_i W^t_i \log_2(1+g^t p^t),
\end{equation}
where $W^t_i \triangleq W_{\max}-m^t_i W^t_{i,\rm RT}$ is the available
bandwidth for the NRT user in the  $t$th time slot, $g^t \triangleq{
\alpha^t\|{\bf h}^t\|^2}/{(G\sigma^2)}$ is the equivalent channel gain, and
$G$ is the signal-to-noise ratio (SNR) gap that reflects the gap between the
capacity-achieving and practical modulation and coding selection (MCS)
transmit policies \cite{cioffi1991multicarrier}. The SNR gap depends on the
practically-used MCS  and the targeted error probability.

\vspace{-1mm}
\subsection{Power Model}
Because we strive to save energy by long term resource allocation for the
NRT traffic, the RT service is called background traffic in the rest of the
paper.

In the $t$th  time slot, the power consumed by the background traffic of BSs
(referred to as \emph{basic power} in the sequel) can be modeled as
\cite{Auer2011}
\begin{equation}\label{E:PowerModel_DS}
P^t_{\rm B}= \textstyle\sum_{i=1}^M\big(\tfrac{1}{\xi} p^t_{i,\rm RT} + {\rm Sign}(p^t_{i,\rm RT}) (P_{\rm
act} -  P_{\rm sle})+P_{\rm sle}\big),
\end{equation}
where $\xi$ is the power amplifier (PA) efficiency, $P_{\rm act}$ and
$P_{\rm sle}$ are the circuit power consumption when the BS is in active and
sleeping mode, respectively, and ${\rm Sign}(x) = \left\{
{\begin{array}{*{20}{c}}
{1,}&{x>0}\\
{0,}&{\rm else}
\end{array}} \right.$.

Further considering the power consumed by the NRT traffic, the total power
consumption at the BSs  in the $t$th  time slot is,\vspace{-1mm}
\begin{align}\label{E:PowerModel}\nonumber
P^t
=&\textstyle \sum_{i=1}^M \tfrac{1}{\xi}(p^t_{i,\rm RT}+ m^t_i p^t)+\\
 & \textstyle \sum_{i=1}^M\big({\rm Sign}(p^t_{i,\rm RT}+ m^t_i p^t)(P_{\rm act}-P_{\rm sle})+P_{\rm sle}\big).
\end{align}

\section{Resource Allocation with Future Information}\label{Sec:Opt_Single}
In this section, we optimize scheduling and power allocation  for the NRT
user to minimize the total energy consumed in the $T$ time slots. To provide
a performance upper bound and gain insights on how to design a viable
resource allocation, we assume that when optimizing at the $t$th time slot
all the information during $T$ time slots are perfectly known, which include
equivalent channel gain $g^t$ of the NRT user, the bandwidth $W^t_i$ and
transmit power $p^t_{i,\rm RT}$ occupied by the background traffic at the
$i$th BS in all time slots, $t=1,\ldots,T$.

For the user with NRT traffic, the requested $B$ bits need to be transmitted
before the deadline. The resource allocation problem to minimize the overall
energy consumption in $T$ time slots under this constraint can be formulated
as follows,\vspace{-2mm}
\begin{subequations}\label{P:2}
\begin{align}\label{P:0}
  \min_{\bf p,\bf m}~&\textstyle\sum\nolimits_{t=1}^T P^t\Delta_{T}
  \\ \label{P:1b}
  s.t. ~& \textstyle\sum\nolimits_{t=1}^T m^t_i W^t_i \log_2 (1+g^t p^t) \Delta_{T}=
  B,
\\ \label{P:1a}
&p^t\geq0,m^t_ip^t+p^t_{i,\rm RT}\leq P_{\max},\\ \nonumber
  &t=1,\ldots,T, i =1,\ldots, M,
\end{align}
\end{subequations}
where ${\bf p} = [p^1,\ldots,p^T]^H$ is the power allocation of the user
during all $T$ time slots, ${\bf m} = [{\bf m}_1, \ldots, {\bf m}_M]$ is the
scheduling status matrix of all BSs during all  time slots, and $\Delta_{T}$
is the duration of each time slot. \eqref{P:1b} is the transmission
constraint of the NRT user, \eqref{P:1a} is the power constraint of the BSs.

Since the NRT user is only accessed by the closest BS in each time slot, we
omit subscript $i$ of $m^t_i,p^t_{i,\rm RT},W^t_i$ for notation simplicity.
Then, $m^t,p^t_{\rm RT},W^t$ represent  the indicator of whether the NRT
user is scheduled by its closest BS, the power allocated to the user,  and
the available bandwidth of the closest BS in the $t$th time slot,
respectively.

According to whether the closest BS is occupied or free from the background
traffic, we divide the $T$ time slots into busy time and idle time. Denote
$\mathcal T_{\rm oc}= \{t|p^t_{\rm RT}>0\}$  with cardinality $T_{\rm oc}$
as the index set of the busy time slots, and denote $\mathcal T_{\rm id} =
\{t|p^t_{\rm RT}=0\}$  with cardinality $T_{\rm id} = T-T_{\rm oc}$ as the
index set of the idle time slots. Then, $\mathcal T_{\rm id}$ is the
complementary set of $\mathcal T_{\rm oc}$, as illustrated in Fig.
\ref{F:01}.

\begin{figure}
\centering
\includegraphics[width=0.45\textwidth]{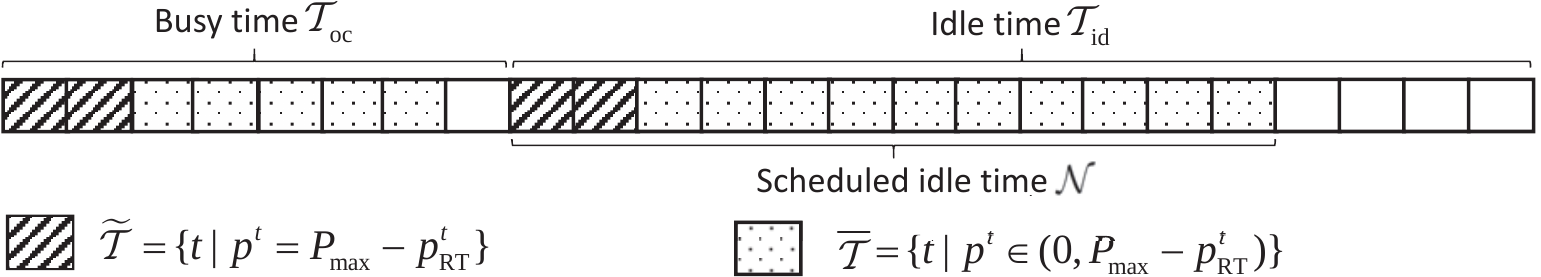}
\caption{Illustration of several sets of the time slots. To help understand,
we have re-ordered the $T$ time slots, where the equivalent channel gains in idle time slots are in a
descending order. }
\label{F:01}
\end{figure}

To fully utilize the transmit power and bandwidth at the BS, the user with
NRT traffic may be served in both kinds of time slots. Then, from
\eqref{E:PowerModel_DS} and \eqref{E:PowerModel}, the overall energy
consumption in the $T$ time slots can be rewritten as follows,\vspace{-2mm}
\begin{align}\nonumber
\sum_{t =1}^T \!\!P^t\Delta_{T}\!\!
=\!\!\!\!\underbrace{\!\!\!\!\sum_{t =1}^T\!\! P^t_{\rm B}}_{\rm Basic~Power}\!\!\!\!\!\!\Delta_{T}\!\!+\!\!
\!\!\underbrace{\!\!\sum_{t =1}^T\!\!\tfrac{1}{\xi} m^t p^t}_{\rm Transmit~Power}\!\!\!\!\!\!\Delta_{T} \!\!+
\!\!\underbrace{\!\!\sum_{t\in\mathcal T_{\rm id}}\!\! m^t(P_{\rm act}-P_{\rm sle})}_{\rm Circuit~Power}\Delta_{T},
\end{align}
which includes the basic power consumed for background traffic, transmit
power for NRT traffic, and extra circuit power for the NRT traffic in idle
time slots. Since the basic power is not affected by the resource allocation
for the NRT traffic, minimizing the overall energy consumption is equivalent
to minimizing the sum of the second and third terms.


Denote ${\cal N} = \{t|m^t=1,t\in {\cal T}_{\rm id}\}$ as the index set of
scheduled idle time slots for the NRT user. Then, $N \triangleq
\sum_{t\in\mathcal T_{\rm id}} m^t$ is the number of the scheduled idle time
slots. During the non-scheduled idle time slots, the transmit power
allocated to the NRT user is zero, i.e., $p^t = 0,m^t = 0,t\in{\cal T}_{\rm
id}- {\cal N}$, and the BS is turned into sleeping mode. Consequently,
problem \eqref{P:2} can be equivalently transformed into the following
problem
\begin{subequations}\label{P-T:2}
\begin{align}\label{P-T:2a}
  \min_{{\bf p},N}~&\textstyle\sum\nolimits_{t\in {\cal T}_{\rm oc}\cup {\cal N}}\frac{1}{\xi}  p^t\Delta_{T}+N (P_{\rm act}-P_{\rm sle})\Delta_{T}
  \\ \label{E:P-T2b1}
  s.t. ~&\textstyle\sum\nolimits_{t\in {\cal T}_{\rm oc}\cup {\cal N}} W^t \log_2(1+g^t p^t)\Delta_{T} = B,\\ \label{E:P-T2b2}
&p^t\geq0,p^t+p^t_{\rm RT}\leq P_{\max},{{t\in {\cal T}_{\rm oc}\cup {\cal N}}},  {\cal N} \subseteq {\cal T}_{\rm id}.
\end{align}
\end{subequations}

After $p^t, t=1,\cdots, T$ and $N$ are found, the solution of $m^t$ can be
obtained as will be stated later.

To solve problem \eqref{P-T:2}, we first optimize $p^t$ with given $N$ and
then exhaustively search the optimal value of $N$  from $T_{\rm id}$ to $1$
that minimizes the total energy consumption.

When $N$ is given, the transmit power can be obtained from the following
standard power allocation problem
\begin{align}\label{P:powerallocation}
\min_{{\bf p}}~&\textstyle\sum\nolimits_{t\in {\cal T}_{\rm oc}\cup {\cal N}}\frac{1}{\xi}
p^t,\\ \nonumber
s.t. ~&\eqref{E:P-T2b1}, \eqref{E:P-T2b2}.
\end{align}
whose optimal solution can be found by water-filling, i.e., \vspace{-1mm}
\begin{equation}\label{E:tranmission power_single}
  p^t = \textstyle\big(\xi \nu  W^t \log_2 e- \frac{1}{g^t}\big)_{0}^{P_{\rm
  max}-p^t_{\rm RT}}, t\in {\mathcal T}_{\rm oc}\cup \mathcal N,
\end{equation}
where $(\cdot)_0^{P_{\max}-p^t_{\rm RT}}$ represents $0 \leq p^t \leq
P_{\max}-p^t_{\rm RT}$, and $\nu$ is Lagrange multiplier.

Denote $\tilde {\mathcal T} = \{t|p^t=P_{\rm max}-p^t_{\rm RT} \}$ as the
index set of the time slots when the NRT user is served with all remaining
transmit power $P_{\rm max}-p^t_{\rm RT}$, and $ \bar {\mathcal
T}=\{t|p^t\in (0,P_{\rm max}-p^t_{\rm RT})\}$ as the index set of the time
slots when $p^t<P_{\rm max}-p^t_{\rm RT}$. Then, the Lagrange multiplier can
be expressed as follows,\vspace{-2mm}
\begin{equation}\label{E:waterfilling_Single}
\textstyle\nu = \frac{2^\frac{B-\sum_{t\in \tilde{ \mathcal T}}W^t \log_2(1+g^t
p^t)\Delta_T }{\sum_{t\in \bar {\mathcal T}} W^t\Delta_T}\ln 2}{\xi\prod_{t\in \bar {\mathcal
T}}
(W^t g^t)^\frac{W^t}{\sum_{t\in \bar {\mathcal T}} W^t}}.
\end{equation}

With \eqref{E:tranmission power_single}, \eqref{P-T:2a} becomes a function
of $N$. Then, the optimal number of the scheduled idle time slots can be
found from $N^* = \arg \min_{N} \sum\nolimits_{t\in\mathcal T_{\rm oc}\cup
\mathcal N}\frac{1}{\xi} p^t\Delta_T+N (P_{\rm act}-P_{\rm sle})\Delta_T$ by
exhaustive searching. With $N^*$, the optimal power allocation during all
the $T$ time slots can be obtained as\vspace{-2mm}
\begin{align}\label{E:powerallocation_perfect}
  \textstyle  p^{t*}  =\!\!\left\{\!\!\!\! {\begin{array}{*{20}{c}}
{\big(\xi \nu ^* W^t \log_2 e- \frac{1}{g^t}\big)_{0}^{P_{\rm
  max}-p^t_{\rm RT}},}&{\!\!\!\!t\in {\mathcal T}_{\rm oc}\cup \mathcal N^{*} }\\
{0,}&{\!\!\!\!t\in{\cal T}_{\rm id}-\mathcal N^{*} }
\end{array}} \right.
\end{align}
where $\nu^*$ is the Lagrange multiplier of problem
\eqref{P:powerallocation} for $N^*$, and ${\cal N}^*$ is index set of the
scheduled $N^*$ idle time slots.

From the water-filling structure of power allocation in
\eqref{E:powerallocation_perfect}, we can see that ${\cal N}^*$ is the $N^*$
idle time slots with highest equivalent channel gains. Consequently, ${\cal
N}^*$ can be obtained as  $\{t|g^t\geq g^*_{\rm th},t\in {\cal T}_{\rm
id}\}$, where the equivalent channel gains of $N^*$ idle time slots exceed
the threshold $g^*_{\rm th}$.

Then, the optimal power allocation to the NRT user can be accomplished with
the following \emph{PC-Algorithm}.
\begin{itemize}
  \item When the BS is in busy time, the allocated power is
      $p^{t*}=\big(\xi \nu ^* W^t \log_2 e- \frac{1}{g^t}\big)_{0}^{P_{\rm
      max}-p^t_{\rm RT}}$.
  \item When the BS is in idle time (i.e., $p^t_{\rm RT}=0$, hence, $W^t
      =W_{\max}$), if the channel condition is good enough such that
      $g^t\geq g^*_{\rm th}$, the allocated power is $p^{t*}=\big(\xi \nu
      ^* W_{\rm max} \log_2 e- \frac{1}{g^t}\big)_{0}^{P_{\rm max}}$.
      Otherwise, no power is allocated.
\end{itemize}

With $p^{t*}$, the optimal scheduling indicator for the NRT user can be
obtained as $m^{t*} = {\rm Sign}(p^{t*}),t = 1,\ldots,T$. When $p^{t*} =0$,
$t\in{\cal T}_{\rm id}- {\cal N}$,  the BS turns into sleeping mode.


\begin{remark} The power allocated in the $t$th time slot
depends on the information in this time slot including the equivalent
channel gain $g^t$, available bandwidth $W^t$ and transmit power $p^t_{\rm
RT}$, as well as the information in other time slots implicitly included in
the Lagrange multiplier $\nu^*$ and threshold $g^*_{\rm th}$. It is
noteworthy that the power allocation among the $T$ time slots shares the
same Lagrange multiplier $\nu^*$ and threshold $g^*_{\rm th}$.
%
\end{remark}
\vspace{-1mm}

\section{Resource Allocation with Context Information}
\emph{PC-Algorithm} provided in last section is only viable if the
information during the $T$ time slots can be estimated. In practice,
however, when optimizing the resource allocation for the NRT users in the
$t$th time slot, only current information $g^t$, $W^t$ and $p^t_{\rm RT}$
are known. The future information after $t$th time slot is hard to predict
accurately, especially the small scale fading gains and excess resources
available for the NRT traffic at each BS. Fortunately, only the Lagrange
multiplier $\nu^*$ and threshold $g^*_{\rm th}$ depend on the information
from future time slots. Inspired by this observation, in the sequel we
estimate these two parameters with the help of context information.

\subsection{Estimation of Lagrange Multiplier $\nu^*$} Recall that $\nu^*$ is the optimal Lagrange multiplier obtained from problem
\eqref{P:powerallocation} with given $N^*$. From the procedure of finding
the optimal resource allocation with future information, we know that to
estimate $\nu^*$ we need to estimate $N^*$ and ${\cal T}_{\rm oc}$, as well
as the values of $g^t$, $W^t$ and $p^t_{\rm RT}$ during $T$ time slots.
We exploit the network and user level context information to help estimate
these values and finally estimate $\nu^*$.

Busy time ${\cal T}_{\rm oc}$ (or equivalently its complementary set ${\cal
T}_{\rm id}$) reflects the resource utilization status, which can be
estimated from \emph{network level context information}, i.e., the average
number of busy time slots $\hat T_{\rm oc}$ from the statistics of traffic
load in the past in real-world systems. With the estimated value $\hat
T_{\rm oc}$, we can estimate the busy time slots set $\hat {\cal T}_{\rm
oc}$ by randomly choosing $\hat T_{\rm oc}$ elements from $\{1,\ldots,T\}$.
Because $\hat {\cal T}_{\rm oc}$ is used for estimating $\nu^*$ rather than
truly applied for allocating resource, such a random guess causes little
performance loss as will be shown in simulations later. Then, the estimated
index set of idle time slots $\hat {\cal T}_{\rm id}$ is the complementary
set of $\hat {\cal T}_{\rm oc} $ with cardinality $\hat T_{\rm id} = T -\hat
T_{\rm oc}$.

Equivalent channel gain $g^t$ includes both large and small channel fading
gains, which can be estimated from \emph{user level context information},
i.e., the predicted trajectory of a mobile NRT user in a given period and
the corresponding average speed. Since the smart phones popularize and the
human behavior becomes predictable, it is possible to estimate the
trajectory in a given period, which can be divided into different parts
according to the average speeds, say when a user takes different kinds of
transportation. Then, the large scale fading gains can be obtained from the
long term location prediction \cite{froehlich2008route}.
Based on the \emph{user level context information}, the user locations can
be roughly predicted, and finally the large scale fading gains of the user
during the $T$ time slots can be estimated as $\hat
\alpha^1,\ldots,\hat\alpha^T$. Considering that ${\mathbb E}\{\|{\bf
h}^t\|^2\} = N_t$, the equivalent channel gains can be estimated as $\hat
g^t, t=1, \ldots, T$, where $ \hat g^t = N_t\hat\alpha^t/(G\sigma^2)$.

Bandwidth $W_{\max}-W^t$ and transmit power $p^t_{\rm RT}$ are the resources
occupied by RT traffic during busy time slots, which are hard to predict due
to the randomness of the arrived request. To ensure the $B$ bits of the NRT
traffic being able to be conveyed within $T$ time slots, we simply assume
the worst case where no bandwidth and power are left for the NRT traffic
when a BS is busy with RT traffic based on the estimated busy time set
$\hat{\cal T}_{\rm oc}$ from \emph{network level context information}.

The variable $N^*$ is the optimal number of the scheduled idle time slots.
Using the context information to problem \eqref{P-T:2}, $N^*$ can be
estimated from the following problem \vspace{-1mm}
\begin{subequations}\label{P-T:2-est}
\begin{align}\label{P-T:2a-est}
  \min_{\hat{\bf p},\hat N}~&\textstyle\sum\nolimits_{t\in \hat {\cal N}}\frac{1}{\xi}  \hat p^t
  \Delta_T+\hat N (P_{\rm act}-P_{\rm sle})\Delta_T\\ \label{E:P-T2b}
  s.t. ~&\textstyle\sum\nolimits_{t\in \hat{\cal N}} W_{\max} \log_2(1+\hat g^t \hat p^t) \Delta_T= B,\\
&\hat p^t\geq 0, p^t\leq P_{\max}, {{t\in \hat{\cal N}}\subseteq \hat {\cal T}_{\rm id}}
\end{align}
\end{subequations}
where $\hat p^t$ is obtained to help estimate $N^*$, and $\hat{\cal N}$ is
the index set of scheduled idle time slots, which is a subset of the
estimated idle time slot set $\hat {\cal T}_{\rm id}$.

The solution of problem \eqref{P-T:2-est} can be found by using the similar
way as solving problem \eqref{P-T:2}. The optimal number of scheduled idle
time slots can be estimated as $\hat N^*$, and finally the estimated
Lagrange multiplier $\hat \nu^*$ can be obtained with $\hat N^*$.




\vspace{-1mm}
\subsection{Estimation of Threshold ${g_{\rm th}^*}$}
Recall that $g_{\rm th}^*$ is used to select $N^*$ idle time slots from the
set ${\cal T}_{\rm id}$ with highest equivalent channel gains. Since $\hat
N^*$ and $\hat {\cal T}_{\rm id}$ are inevitably with estimation errors,
intuitively setting the threshold according to the estimated values of $g^t$
and $N^*$ may lead to insufficient number of scheduled idle time slots,
which results in outage in the transmission. In order to control the outage
probability, we find a conservative threshold. To this end, we ensure the
probability that at least $\hat N^*$ idle time slots will be scheduled is
close to one, i.e.,\vspace{-1mm}
\begin{equation}\label{E:outage}
  {\rm Pr}(n\geq \hat N^*,t\in \hat {\cal T}_{\rm id}) = 1-\epsilon,
\end{equation}
where $\epsilon$ is a small value to control the outage probability.

With the help of network and user level context information, we obtain a
closed form expression of the probability in the following proposition,
where the accuracy of the approximation will be evaluated later via
simulations.
\begin{proposition}\label{Pro:1}
When $T$ and $\hat T_{\rm id}$ are large, the probability that at least
$\hat N^*$ idle time slots are scheduled  can be approximated as
follows,\vspace{-2mm}
\begin{equation}\label{E:Pr_real}
\!{\rm Pr}(n\geq \hat N^*,t\in \hat {\cal T}_{\rm id})\! \approx \!\textstyle
\sum\nolimits_{n=\lceil{\hat N^*T}/{\hat T_{\rm id}}\rceil}^{T}
\!\!\textstyle\binom{T}{n}
 (q)^n(1\!-\!q)^{T\!-\!n}
\end{equation}
where $\lceil\cdot\rceil$ is the ceiling function, \vspace{-2mm}
\begin{align}\label{E:q}
\textstyle q=
\frac{1}{T}\sum_{t=1}^T\sum_{i=0}^{N_t-1}\frac{1}{i!}\big({ \frac{G\sigma^2}{\hat \alpha^t} \hat g_{\rm th}}\big)^i
 e^{-{ \frac{G\sigma^2}{\hat \alpha^t}\hat g_{\rm th}}},
\end{align}
and $\hat g_{\rm th}$ is the estimated threshold.
\end{proposition}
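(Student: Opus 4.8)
The plan is to reduce the event in \eqref{E:outage} to a binomial tail in three moves: (i) compute the per-slot probability that a channel gain clears the threshold $\hat g_{\rm th}$; (ii) aggregate these Bernoulli events across the $T$ slots into a count and approximate it by a homogeneous binomial with the averaged success probability $q$; and (iii) translate ``at least $\hat N^*$ scheduled slots among the idle set'' into an equivalent threshold on the total number of good slots among all $T$ slots. Matching these against the claimed expression then recovers \eqref{E:Pr_real} and \eqref{E:q}.

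First I would fix the per-slot probability. Since the entries of ${\bf h}^t$ are i.i.d.\ $\mathcal{CN}(0,1)$, the quantity $\|{\bf h}^t\|^2$ is a sum of $N_t$ unit-mean exponential variables, hence Erlang (Gamma with integer shape $N_t$ and unit scale), whose survival function is $\Pr(\|{\bf h}^t\|^2\ge x)=e^{-x}\sum_{i=0}^{N_t-1}x^i/i!$. A slot is scheduled when $g^t\ge \hat g_{\rm th}$, i.e.\ when $\|{\bf h}^t\|^2\ge G\sigma^2\hat g_{\rm th}/\hat\alpha^t$ with the estimated large-scale gain $\hat\alpha^t$; substituting $x=G\sigma^2\hat g_{\rm th}/\hat\alpha^t$ gives the per-slot clearance probability
\begin{equation}\label{E:qt}
q^t=\sum_{i=0}^{N_t-1}\frac{1}{i!}\Big(\frac{G\sigma^2}{\hat\alpha^t}\hat g_{\rm th}\Big)^i e^{-\frac{G\sigma^2}{\hat\alpha^t}\hat g_{\rm th}},
\end{equation}
which is exactly the $t$th summand of $q$ in \eqref{E:q}.

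Next I would count. Treating the clearance events across slots as independent (they depend only on the independent small-scale fading), the total number of good slots $N_{\rm g}=\sum_{t=1}^T \mathbf 1\{g^t\ge\hat g_{\rm th}\}$ is a Poisson-binomial variable with parameters $q^1,\ldots,q^T$. For large $T$ I would approximate it by the homogeneous binomial $\mathrm{Bin}(T,q)$ with $q=\frac1T\sum_t q^t$, so that $\Pr(N_{\rm g}=n)\approx\binom{T}{n}q^n(1-q)^{T-n}$. The idle set $\hat{\mathcal T}_{\rm id}$ is a size-$\hat T_{\rm id}$ subset of the $T$ slots, so conditioned on $N_{\rm g}$ the number $n$ of good idle slots is hypergeometric with mean $N_{\rm g}\,\hat T_{\rm id}/T$; for large $T$ and $\hat T_{\rm id}$ this concentrates, giving $n\approx N_{\rm g}\,\hat T_{\rm id}/T$. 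Hence $\{n\ge\hat N^*\}$ is, to leading order, the event $\{N_{\rm g}\ge \hat N^* T/\hat T_{\rm id}\}$, and since $N_{\rm g}$ is integer-valued this becomes $\{N_{\rm g}\ge\lceil\hat N^*T/\hat T_{\rm id}\rceil\}$. Summing the binomial pmf over this range yields \eqref{E:Pr_real}.

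The main obstacle is the rigorous control of the two approximations, both of which I would only justify asymptotically. Replacing the Poisson-binomial by $\mathrm{Bin}(T,q)$ is accurate when the $q^t$ do not vary too wildly and $T$ is large (quantifiable, e.g., via a Le Cam / total-variation bound), while collapsing the hypergeometric good-idle count onto its mean needs $\hat T_{\rm id}$ and $T-\hat T_{\rm id}$ both large so that its relative spread $\sim 1/\sqrt{\hat T_{\rm id}}$ is negligible. I expect the ceiling step, i.e.\ passing from the fractional threshold $\hat N^*T/\hat T_{\rm id}$ to the integer $\lceil\hat N^*T/\hat T_{\rm id}\rceil$, to introduce an $O(1/\hat T_{\rm id})$ distortion that is harmless in the stated large-$T$, large-$\hat T_{\rm id}$ regime; verifying that these errors are indeed subdominant is the delicate part, which is why the statement is phrased (correctly) as an approximation rather than an identity.
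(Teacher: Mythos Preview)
Your three-step plan matches the paper's proof in structure and conclusion: compute the per-slot clearance probability from the Gamma survival function, lift the idle-set event to a threshold on all $T$ slots via the ratio $T/\hat T_{\rm id}$, and evaluate the resulting count as a binomial tail. The one genuine technical difference is how the heterogeneous per-slot probabilities $q^t$ are collapsed to a single $q$. You treat $\sum_t \mathbf 1\{g^t\ge\hat g_{\rm th}\}$ as Poisson--binomial and invoke a mean-matched binomial approximation (Le~Cam/total-variation type), which is the standard route. The paper instead introduces auxiliary i.i.d.\ variables $\beta^t$ drawn from the empirical distribution of $\{\hat\alpha^t\}$, so that each slot has identical success probability $q=\frac{1}{T}\sum_t q^t$ by construction, and then appeals to Borel's law of large numbers to argue that the counts with $\beta^t$ and with $\hat\alpha^t$ agree asymptotically. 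Both devices justify the same binomial with the same $q$; yours is more transparent about the error (controlled by $\mathrm{Var}(q^t)$), while the paper's exchangeability trick makes the binomial exact for the surrogate process at the cost of a less explicit coupling step. Your hypergeometric-concentration framing of step~(iii) is likewise a cleaner version of the paper's sampling argument for passing from $\hat{\mathcal T}_{\rm id}$ to $\{1,\dots,T\}$.
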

\begin{proof}
See Appendix A.
\end{proof}

Since a  higher threshold means that less idle time slots will be scheduled,
the probability ${\rm Pr}(n\geq \hat N^*,t\in \hat {\cal T}_{\rm id})$ is a
decreasing function of the threshold $\hat g_{\rm th}$. Then, by
substituting \eqref{E:Pr_real} into \eqref{E:outage}, $q$ can be found
numerically by bisection searching. Further using \eqref{E:q}, the estimated
threshold $\hat g_{\rm th}^*$ satisfying \eqref{E:outage} can be finally
obtained by bisection searching.

\begin{remark}
The proposed BS sleeping, scheduling and power allocation policy is
implemented as follows. When the request of the NRT user arrives at the 1st
time slot, we can estimate the two parameters $\nu^*$ and ${g_{\rm th}^*}$
with available context information using the method proposed in this
section. Then, when we design the resource allocation in the $t$th time
slot, by using the values of $g^t$, $W^t$ and $p^t_{\rm RT}$ that are
available in current time instant, \emph{PC-algorithm} can be applied to
obtain $p^{t*}$, and the scheduling can be obtained as $m^{t*} = {\rm
Sign}(p^{t*})$.  When $p^{t*} =0$, $t\in{\cal T}_{\rm id}- {\cal N}$,  the
BS goes to sleeping mode.
%
\end{remark}

\section{Simulation}

In this section, we evaluate the proposed context-aware resource allocation
by simulations.

\begin{figure}
\centering
\includegraphics[width=0.35\textwidth]{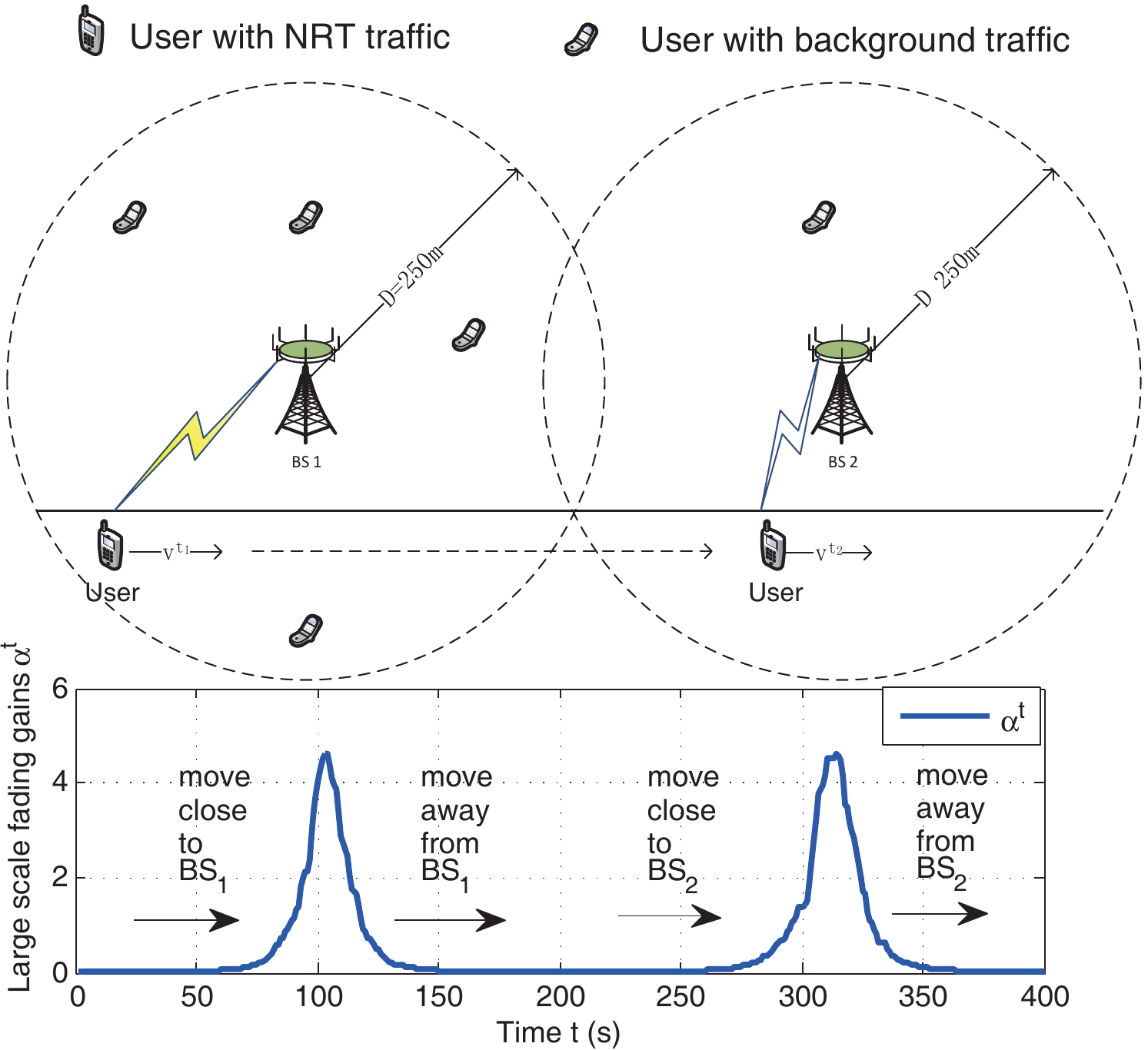}
\caption{Two macro cells with radius $D=250$ m, $N_t=4$, the two BSs alternatively serve a mobile
user with NRT traffic in $T$ time slots each with interval of $\Delta_T=1s$ when there
are random arrival background requests. The maximal transmit
power and bandwidth of each BS are $P_{\max} = 40$ W and $W_{\max} = 10$ MHz. The path loss model is $15.3+37.6\log_{10}(d)$, where $d$ is
the distance between the BS and user in meter \cite{TR36.814}. The
cell-edge  SNR is set as 5 dB, which is the SNR received at the distance $D$
when a BS transmits with $P_{\max}$, reflecting both noise and intercell
interference. The small scale channel is subject to Rayleigh block fading.
The circuit power consumption parameters are $P_{\rm act}$ = $233.2$ W, $P_{\rm sle}$ =$150$ W, and
$\xi=21.3\%$\cite{Auer2011}.
} \label{F:example}
\end{figure}

The simulation setups are shown in Fig. \ref{F:example}. To reflect the
resource utilization of busy BSs, the requests of background traffic is
modeled as Poisson process with average rate $\lambda_1$ and $\lambda_2$ for
the 1st and 2nd BSs, and the service time of each request follows
exponential distribution with mean $V=2$ time slots \cite{das2003framework}.
Each request with background traffic is assumed to occupy the same transmit
power of $8$ W and bandwidth of $2$ MHz. Then, a BS can serve at most
$N_C=5$ requests of background traffic in one time slot, and the newly
arrived request will not be admitted if the BS has been fully occupied.
Based on this request model of the background traffic, the average number of
idle time slots for the $i$th BS during $T$ time slots can be obtained as
$\hat T_{\rm id} = \sum_{i=1}^2{\rm Pr_{id}}{(i)T_i}$, where ${\rm
Pr_{id}}{(i)}= {1}/\big({\sum_{n=0}^{N_C} {({\lambda_i V} )^n}/{n!}}\big)$
obtained in \cite{allen1990probability} is the probability that the $i$th BS
is idle, and $T_i$ is the duration when NRT user is in the $i$th cell. When
$\lambda_i =0.2$, the probability is 0.67, i.e., 67\% time slots are idle.

The user with NRT traffic moves across the two macro cells along a line with
speed $v^t$ uniformly distributed in $(0,5)$ m/s, i.e., during the $t$th
time slot, the user is in constant speed $v^t$ and in each time slot, the
speed is different. Therefore, the average speed during $T$ time slots is
$2.5$ m/s from which we can roughly estimate the large scale fading gains.
$B=5$ GBits of data need to be conveyed within the $T$ time slots. Shannon
Capacity is considered in \eqref{E:Rate}, i.e., $G=0$ dB. The results are
obtained from 100 Monte Carlo trails.

We first validate the approximation in Proposition \ref{Pro:1}. From the
simulated probability ${\rm Pr}(n\geq \hat N^*,t\in\hat {\cal T}_{\rm id})$
with given $\hat N^*$ and $\hat T_{\rm id}$ as well as the numerical results
in Fig. \ref{F:0}, we can see that the approximation is accurate.
\begin{figure}
\centering
\includegraphics[width=0.40\textwidth]{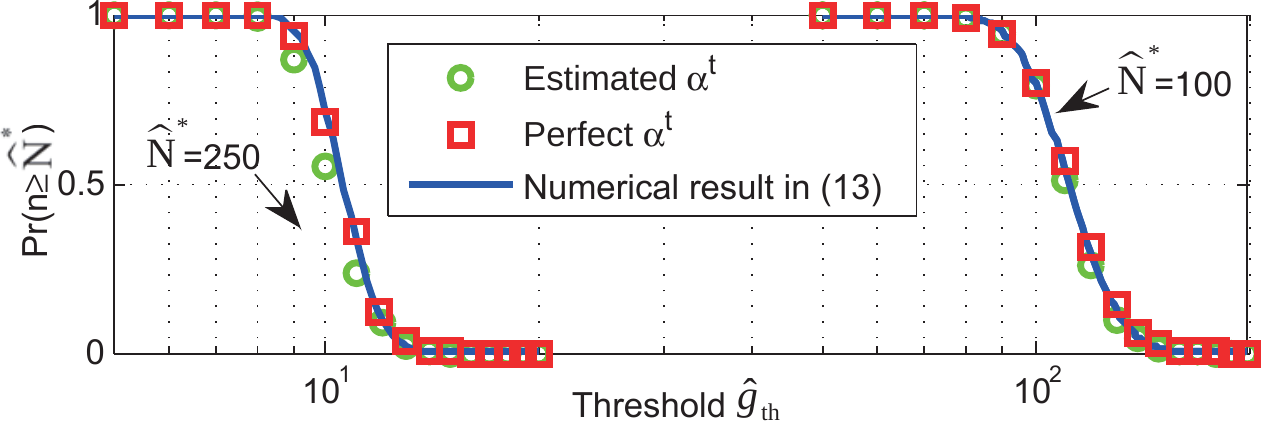}
\caption{The accuracy of the approximation of ${\rm Pr}(n\geq \hat N^*,t\in\hat {\cal T}_{\rm id}) $
when $\alpha^t$ is perfectly known
and is estimated with errors, $T = 1000, \hat T_{\rm id} = 500$.}
\label{F:0}
\end{figure}

To show the impact of different context information on saving energy, the
following approaches are simulated. With all approaches, the BS will turns
      into sleep mode when there is no any traffic to serve.
\begin{enumerate}
  \item \emph{SE-maximizing without context information} (Legend ``SE-No
      context"): Without considering context information, the BS serves
      the NRT traffic as best effort service that maximizes the capacity
      in each time slot, i.e., $p^t= P_{\max}-p^t_{\rm RT}$.
  \item \emph{EE-maximizing without context information} (Legend ``EE-No
      Context"): The BS maximizes the ratio of the instantaneous rate
      transmitted and the power consumed in each time slot.
  \item \emph{Resource allocation with user and application level context
      information} (Legend ``U$\&$A Context "): The BS estimates the
      Lagrange multiplier and threshold without considering the context
      information from network, i.e., without the worst case assumption.

  \item \emph{Resource  allocation with  network, user and application
      level context information} (Legend ``All Context "): This is the
      proposed context-aware resource allocation.

\item \emph{Resource allocation with all future information }(Legend
    ``Upper Bound"): This is the optimal solution in section III with
    perfect future information, which provides the  performance upper
    bound.
\end{enumerate}
\begin{figure}
\centering
\includegraphics[width=0.38\textwidth]{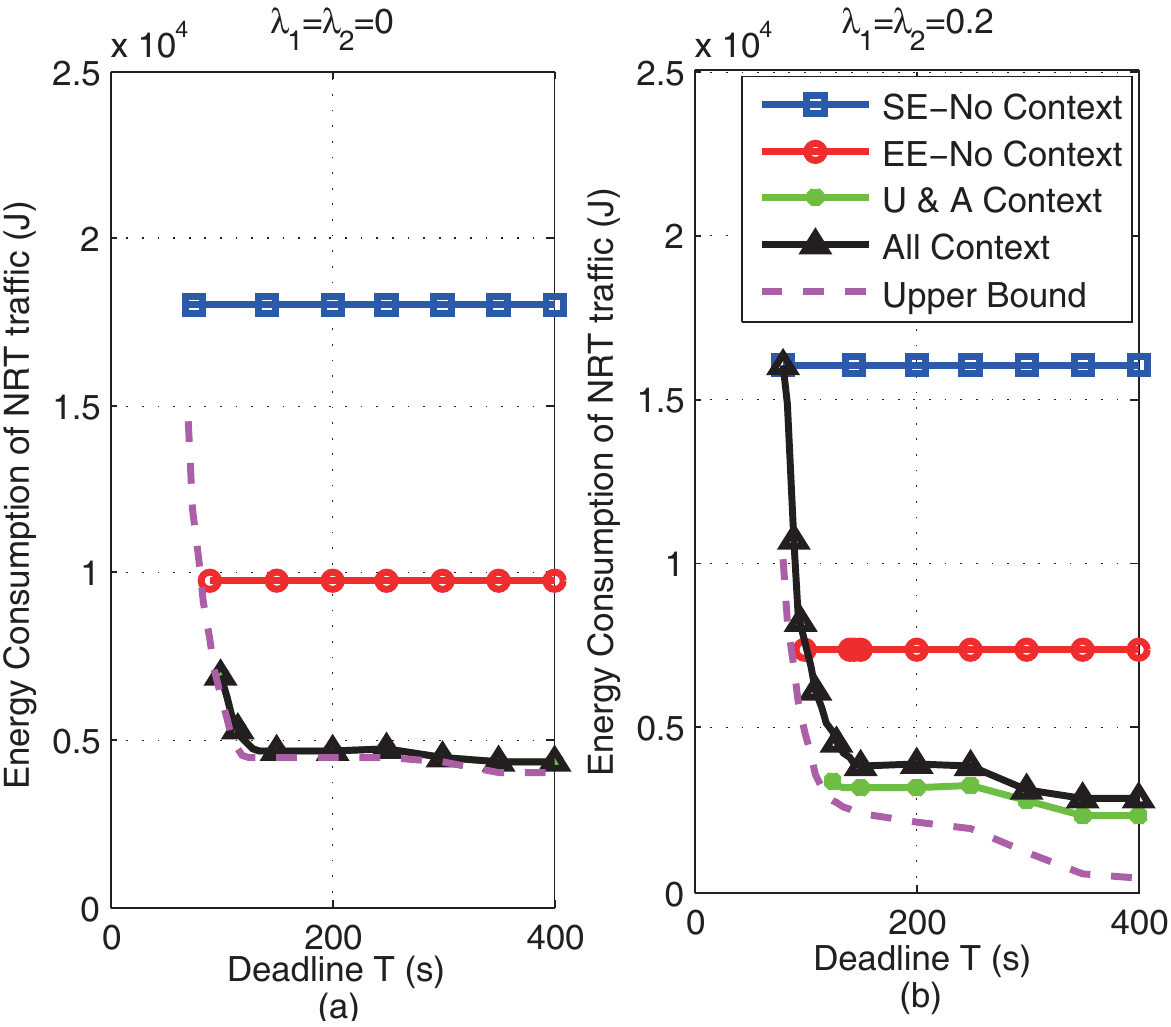}
\caption{Energy consumption of NRT traffic versus
deadline $T$.}
\label{F:1}
\end{figure}

Since context information is exploited to save energy for the NRT traffic,
we only show the results for this kind of traffic, which is the total energy
consumption minus the basic energy consumption, $\sum_{t=1}^T (P^t -
P^t_{\rm B})\Delta_T$.

In Fig. \ref{F:1}, we show the impact of the deadline of NRT traffic on
energy saving, where only the results without outage are presented. We can
see that with the increase of deadline $T$, the SE- and EE-maximizing
approaches consume more energy than the proposed context-aware resource
allocation due to not considering the application level context information.
Moreover, the energy consumption of the proposed method decreases as the
deadline increases. When BSs are without background traffic (i.e., in Fig.
\ref{F:1}(a)), the proposed method almost performs the same as the upper
bound, which implies that the rough knowledge of user level  context
information is good enough to estimate the two parameters, $\nu^*$ and
$g^*_{\rm th}$. When the BSs are with background traffic (i.e., in Fig.
\ref{F:1}(b)), there is a gap between ``All Context" and ``Upper Bound",
which comes from the worst case assumption on the network level context
information when we estimate parameter $\hat \nu^*$. Comparing Fig.
\ref{F:1}(a) with Fig. \ref{F:1}(b), we see that the energy consumption is
lower when the background traffic exists and $T$ is large, because there are
more chances to transmit in the busy time slots, which reduces the circuit
energy consumption for the NRT traffic. ``All Context" consumes  a little
more energy than ``U$\&$A Context". This is because the transmission is more
conservative when considering the network level context information, in
order to ensure lower outage probability.

\begin{figure}
\centering
\includegraphics[width=0.36\textwidth]{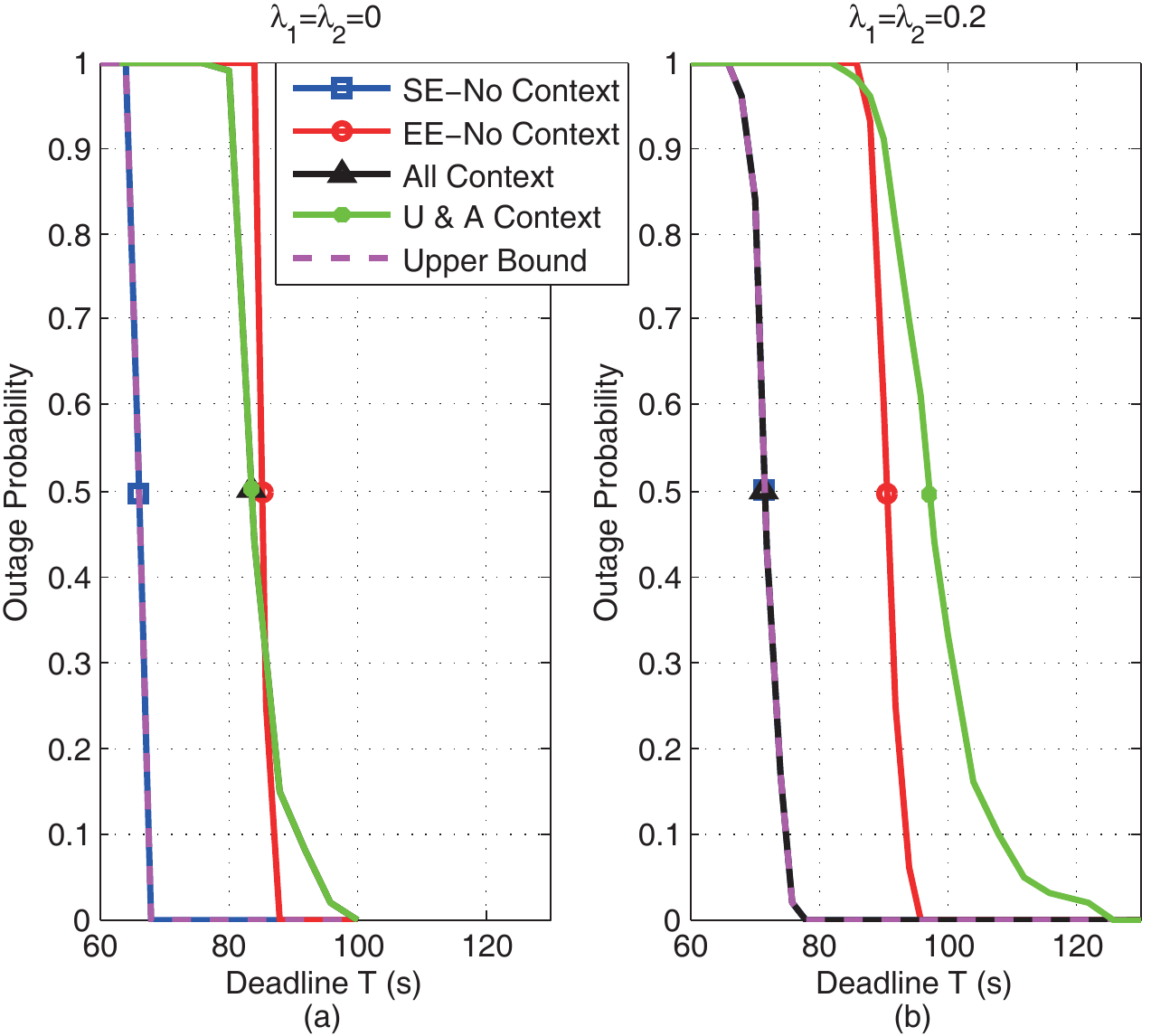}
\caption{Outage probability versus deadline T.}
\label{F:2}
\end{figure}

In Fig. \ref{F:2}, we show the outage probability corresponding to Fig.
\ref{F:1}. When there is no background traffic, the context-aware resource
allocation yields similar outage probability to the EE-maximizing method
without context information. When there exists background traffic as in Fig.
\ref{F:2}(b), the context information from network helps reduce the outage
probability. Note that in Fig. \ref{F:1}(b) the performance exploiting and
not exploiting the context information from network are close. This
indicates that the context information from the application and users helps
save the energy and that from the network helps reduce the outage
probability.

\vspace{-1mm}
\section{Conclusion}
In this paper, we proposed a context-aware resource allocation policy for
the non-real-time traffic  when the BS is occupied by background traffic.
The application, user and network levels context information were exploited
to save energy. Simulation results showed that the application and user
levels context information can help save energy, and the network level
context information can help reduce the outage probability in transmission.
\vspace{-1mm}

\appendices
\section{Proof of Proposition 1}
Because $\hat {\cal T}_{\rm id}$ is estimated as a random guess from ${\cal
T} \triangleq\{1,\ldots,T\}$, $\hat {\cal T}_{\rm id}$ can be viewed as
$\hat T_{\rm id}$ samples from the statistical population $\cal T$.
Consequently, the probability to select $N^*$ samples from $\hat {\cal
T}_{\rm id}$ is equal to the probability to select $\lceil N^*T/\hat T_{\rm
id}\rceil$ samples from $\cal T$ when $T$ and $\hat T_{\rm id}$ approach
infinity. Therefore, when $T$ and $\hat T_{\rm id}$ are large, we
have\vspace{-2mm}
\begin{align}\label{E:App1}
\textstyle\!\!{\rm Pr}(n\geq \hat N^*,t\in\hat{\cal T}_{\rm id}) \!\approx
 \!{\rm Pr}(n\geq\lceil\tfrac{ \hat N^*T}{\hat T_{\rm id}}\rceil,t\in\cal T).
\end{align}

Without loss of generality, the values of equivalent large scale fading
gains and their estimates $ \alpha^t, \hat \alpha^t, t\in\cal T$ are within
a range of $[\alpha_{\min},\alpha_{\max}]$. We divided the range into $J$
intervals as $\delta_1,\ldots,\delta_{J}$, where $\delta_j =
[\alpha_{\min}+(j-1)\delta,\alpha_{\min}+j\delta]$, and $\delta =
(\alpha_{\max}-\alpha_{\min})/J$. The number of $ \hat \alpha^t,t\in \cal T$
located in the interval $\delta_j$ is denoted as $x_j$.

Construct a sequence of i.i.d variables $\beta^t,t\in \cal T$, whose
probability mass function is ${\rm Pr}(\beta^t = \hat \alpha^i) =
\frac{1}{T},i \in \cal T$. Then, the probability that $\beta^t$ is located
in the interval $\delta_j$ is ${\rm Pr}(\beta^t\in\delta_j)=x_j/T$. When $T
\to \infty$, based on Borel's law of large numbers, the average number of $
\beta^t,t\in \cal T$ located in the interval $\delta_j$ is $
 {\mathbb E}\{z_j\} =  \textstyle \lim_{T\to\infty} z_j = {\rm Pr}(\beta^t\in \delta_j) T = x_j
$. Therefore, we can use $\beta^t, t\in\cal T$ to approximate $\hat
\alpha^t, t\in\cal T$. Further considering that $\|{\bf h}^t\|^2,t\in\cal T$
are i.i.d., the probability to select $\lceil\tfrac{ \hat N^*T}{\hat T_{\rm
id}}\rceil$ equivalent channel gains from $g^t = \alpha^t\|{\bf
h}^t\|^2/(G\sigma^2),t\in\cal T$ can be approximated as the probability to
select same number of values from $\beta^t\|{\bf
h}^t\|^2/(G\sigma^2),t\in\cal T$, i.e., \vspace{-2mm}
\begin{align}\nonumber
  {\rm Pr}(n\geq\lceil\tfrac{ \hat N^*T}{\hat T_{\rm
id}}\rceil,t\in{\cal T})\!\! \approx\!\!\textstyle
\sum\nolimits_{n=\lceil{\hat N^*T}/{\hat T_{\rm id}}\rceil}^{T}
\!\!\textstyle\binom{T}{n}
 (q)^n(1-q)^{T-n},
\end{align}
where $q$ is the probability that $\beta^t\|{\bf h}^t\|^2/(G\sigma^2)$
exceeds the threshold $\hat g_{\rm th}$. Then, together with \eqref{E:App1},
we have \eqref{E:Pr_real}.

Because $\|{\bf h}^t\|^2$ follows Gamma distribution $\mathbb{G}(N_t, 1)$
with probability density function $f(h) = \frac{h^{N_t -1}}{\Gamma(N_t)}
e^{-h}$, the probability that $ \beta^t\|{\bf h}^t\|^2/(G\sigma^2) > \hat
g_{\rm th}$ can be derived as
\begin{align}\nonumber
  q =& {\rm Pr}(\tfrac{\beta^t \|{\bf h}^t\|^2}{G\sigma^2}\geq \hat g_{\rm th})
\nonumber  \\
  =&\textstyle
  \sum_{i=1}^T {\rm Pr}(\beta^t
 =\hat \alpha^i ){\rm Pr}(\|{\bf h}^t\|^2\geq { \frac{G\sigma^2}{\hat \alpha^i}\hat g_{\rm th}} )
\nonumber\\=& \textstyle\frac{1}{T}\!\!\sum_{i=1}^T\!\!
\int_{ \frac{G\sigma^2}{\hat \alpha^i}\hat g_{\rm th}}^\infty f(h){\rm d } h
\nonumber  \\= &\textstyle
\frac{1}{T}\!\!\sum_{i=1}^T\!\!\sum_{j=0}^{N_t-1}\!\!\frac{1}{j!}
\big({ \frac{G\sigma^2}{\hat \alpha^i}\hat g_{\rm th}}\big)^j
 e^{-{ \frac{G\sigma^2}{\hat \alpha^i}\hat g_{\rm th}}},\nonumber
\end{align}
which is \eqref{E:q}.
\vspace{-1mm}

\bibliographystyle{IEEEbib}
\bibliography{IEEEabrv,2014YCT}
\end{document}